\newcommand{\mat}{\boldsymbol}
\newtheorem{proposition}{\it Proposition}
\begin{document}

\title{On Maximizing Information Reliability in Wireless Powered Cooperative Networks}

\author{\IEEEauthorblockN{M. Majid Butt\IEEEauthorrefmark{1},~Galymzhan~Nauryzbayev\IEEEauthorrefmark{2},~Nicola~Marchetti\IEEEauthorrefmark{3}}\\
\IEEEauthorblockA{\IEEEauthorrefmark{1}Nokia Bell Labs, Paris-Saclay, France,
Email: majid.butt@nokia-bell-labs.com}\\
\IEEEauthorblockA{\IEEEauthorrefmark{2}Nazarbayev University, Nur-Sultan, 010000, Astana, Kazakhstan\\
Email: galymzhan.nauryzbayev@nu.edu.kz}\\
\IEEEauthorblockA{\IEEEauthorrefmark{3}Trinity College, University of Dublin, Ireland,
Email: nicola.marchetti@tcd.ie}
	\thanks{This work was supported in part by Science Foundation Ireland under European Regional Development Fund under Grant 13/RC/2077 and the Nazarbayev University Faculty Development Competitive Research Program under Grant SEDS2020014.}
}

\maketitle

\begin{abstract}
Unpredictable nature of fading channels and difficulty in tracking channel state information pose major challenge in wireless energy harvesting communication system design. In this work, we address relay selection problem for wireless powered communication networks, where the relays harvest energy from the source radio frequency signals. A single source-destination pair is considered without a direct link.
The connecting relay nodes are equipped with storage batteries of infinite size.
We assume that the channel state information (CSI) on the source-relay link is available at the relay nodes. Depending on the availability of the CSI on the relay-destination link at the relay node, we propose two relay selection schemes and evaluate their outage probability. Availability of the CSI at the relay node on the relay-destination link considerably improves the performance due to additional flexibility in the relay selection mechanism.
Due to absence of CSI throughout the network at the time of transmission path selection, the analysis of the problem is not tractable. Therefore, we relax our assumptions on availability of CSI and closed-form expressions of the outage probability as a function of the amount of the available harvested energy are derived for both CSI availability cases.
Finally, we numerically quantify the performance for the proposed schemes and compare the outage probability for fixed and equal number of wireless powered forwarding relays.
\end{abstract}
\begin{IEEEkeywords}
Relay selection, RF energy harvesting, outage probability, SWIPT, wireless powered communication networks.
\end{IEEEkeywords}

\section{Introduction}
Wireless powered communication networks (WPCNs) are one of the promising technologies to achieve sustainable wireless networks, where the communicating nodes are powered by radio frequency (RF) signals. The simultaneous wireless information and power transfer (SWIPT) concept has been investigated extensively to realize WPCNs, e.g., authors in \cite{He_commag_2015,AHMED_compnet} survey recent works in energy harvesting communication domain. In SWIPT, energy and information are transferred from the same RF signal by using either time sharing or power splitting protocol \cite{Zhang_IEEECOM:2013,Ali_IEEEWS:2013, Gao_transcognitive_2019}. The time sharing protocol allocates dedicated time for energy harvesting and information transfer, while power splitting extracts energy and information from the same RF signal.

Relays are used in wireless networks to extend the coverage and increase the information reliability. Relay selection problem using amplify-and-forward (AF) and decode-and-forward (DF) techniques in WPCNs has been addressed in literature quite extensively, e.g., \cite{Krikidis_COML:2012,Michalopoulos:JSAC2015}. Similar to the traditional relay selection schemes with fixed power supply, battery status information (BSI) enabled relay selection schemes are proposed to improve the system performance of multi-relay wireless powered cooperative networks by exploiting both channel state information (CSI) and BSI to make the selection decision \cite{Gu_add1,Liu_add1}. In such systems, the relaying nodes, with BSI indicating the amount of stored energy above a predetermined power threshold required for successful information transmission, will first create a subset and then will send their CSI back to the source node. By the next step, the 'best' relay from the subset will be chosen to forward the source information, while the remaining relays will harvest energy from the source information signal.

In literature, there are three main relaying architectures for SWIPT systems, namely, ideal relaying receiver (IRR), power-splitting relaying (PSR) and time-switching relaying (TSR) protocols \cite{Ali_IEEEWS:2013,Krikidis_COML:2012,Galym_pimrc,Michalopoulos:JSAC2015,Zhong:TCOM2014,Galym_globecom,Galym_vtc2018fall,Nasir_TCOM:2015}. A recent work in \cite{Michalopoulos:JSAC2015} discusses an RF based cooperative network, where the relays are used for transmitting information to a designated receiver and for transmitting energy to an associated ambient RF energy harvester. For the case where the number of relays is more than two, two relay selection methods are developed and the trade-off between outage probability and average energy transfer is discussed. The authors in \cite{Zhong:TCOM2014} analyze the performance of a network consisting of a single source, single relay and single destination. The throughput of the TSR protocol for both AF and DF relaying schemes for the same model is also investigated in \cite{Nasir_TCOM:2015}. A similar system model is considered in \cite{Zhu_TCOM_2015}, where the relay is equipped with multiple antennas and the outage probability and ergodic capacity of the system are studied. In \cite{Galym_globecom,Galym_ACCESS}, the authors investigate the ergodic capacity and outage probability over $\alpha-\mu$ fading channels of an AF-based network consisting of the source, relay and destination nodes for the IRR, PSR and TSR protocols. The outage probability analysis in two-hop DF and AF relaying systems over log-normal fading channels is provided in \cite{Khaled2,Khaled1}, respectively.  Another work on the outage is presented in \cite{HYu_add1}, where the authors study multi-relay wireless powered cooperative systems over Nakagami-$m$ fading channels.

In \cite{Galym_vtc2018fall}, the authors analyze the outage performance of the wireless powered full-duplex (FD) AF and DF relaying networks in $\alpha-\mu$ environment. Furthermore, the outage in two-way (TW) FD relaying networks with multiple pairs of users is investigated in \cite{Xia_add1}. A similar model with multiple source-destination pairs communicating through a single energy harvesting relay is considered in \cite{Ding_TWCOM:2013} and the system effect of the harvested energy distribution among the users is investigated. In \cite{Senanayake_add1}, the authors analyze outage performance obtainable under a decentralized relay selection strategy in multi-user multihop DF based relaying networks over Nakagami-$m$ fading channels. Multihop relaying for a cognitive radio network is investigated in \cite{8233153}, where the authors aim at minimizing end-to-end outage probability for a secondary user under the energy causality and primary user cooperation rate constraints. The system outage probability is investigated in \cite{Li_add1}, where the authors propose optimal and sub-optimal joint relay-antenna selection schemes for TW AF relaying networks. Moreover, the authors in \cite{Wang_add3} introduce a general relay selection strategy for the PSR-enabled TW FD relaying network, where average sum capacity and outage probability are studied. Other performance metrics such as energy efficiency and security issues in wireless power transfer enabled relaying networks are studied in \cite{Chang_TVT,Wang_add1,Wang_add2,Mohammadi_add1}. For instance, the authors in \cite{Wang_add1,Wang_add2} propose relay selection schemes to improve secrecy outage probability in cooperative DF and AF relaying networks under the presence of eavesdroppers. In \cite{Mohammadi_add1}, joint relay selection and power allocation scheme is proposed for large-scale multiple-input multiple-output AF-based relaying systems with passive eavesdroppers. The authors provide closed-form expressions for ergodic secrecy rate and secrecy outage probability over Rayleigh fading channels.

\subsection{Motivation and Contributions}
In this work, we aim to minimize the outage probability for a cooperative system, comprising of a single source, multiple energy harvesting (EH) relays and a single destination.
A single relay is selected to forward the source signal to the destination.
We assume half duplex relay communication such that the relays receive the source signal in a time slot $t$ and the selected relay forwards it to the destination in time slot $t+1$.
The channels on both the source-relay and the relay-destination links are mutually independent, and are independently and identically distributed (i.i.d.). Due to mutual independence and i.i.d. channel assumption on both links, relay selection poses new challenges as the relay selected to receive data from the source in time slot $t$ will have a completely independent (and unknown) channel realization in time slot $t+1$ for transmission on the relay-destination link.

For a similar setting, the work in \cite{Butt_globalsip,majid:ietsp16} assumes that the CSI is not available on the source-relay link at the relay node. In contrast, we assume availability of the CSI on the source-relay link at the relay node throughout this work. When the CSI is available on the source-relay link, the relay selection exploits the CSI to decide which relays are dedicated for data/energy transfer \cite{Majid_globecom}. Then, conditioned on the availability of transmit CSI (CSIT) at relay on the relay-destination link, we propose novel relay selection schemes.
Furthermore, we derive closed-form analytical expressions of the outage probability which are unified in terms of the amount of the harvested power.
The consideration of the availability of both transmit and receive CSI at the relay node requires a different relay selection approach as compared to relay selection in \cite{majid:ietsp16}.
We formulate the outage minimization problem and evaluate the performance of the proposed relay selection schemes numerically. Then, we compare our schemes with the existing schemes available in literature and show their superiority in terms of outage performance.

The remainder of this paper is structured as follows. Section \ref{sect:system} introduces the system model and the fundamentals for the problem. The novel schemes are proposed in Section \ref{sect:schemes} and the performance is evaluated in Section \ref{sect:results}. Finally, the paper outlines the main concluding remarks in Section \ref{sect:conclusion}.

\section{System Model}
\label{sect:system}
Consider a two-hop DF wireless communication system where a source node $(S)$ transmits information to a destination node $(D)$ by means of one relaying node $(L_{i*})$ selected from $N$ available energy-limited relay nodes (a relay node $i$ is denoted by $L_i$). Moreover, due to the apartness between $S$ and $D$, we assume that there is no direct link between them. We consider a broadcast channel between the source and the relays and assume that all nodes are equipped with a single antenna. The source-to-relay ($S\to L_i$) and relay-to-destination ($L_i\to D$) links, indicated by $h_{si}$ and $h_{id}$, are subject to quasi-static i.i.d. Rayleigh fading. The $S \to L_i$ and $L_i\to D$ distances are denoted by $d_{si}$ and $d_{id}$, respectively; the corresponding path-loss exponent $\alpha$ is chosen to be identical for all the links.

The CSIT is not available at the source and, therefore, the source transmits with a fixed power $P_s$. The received signal $ y_i(t)$ at the relay $L_i$ is expressed as,
\begin{equation}
y_i(t)= \frac{1}{\sqrt{d_{si}^{\alpha}}}\sqrt{P_s}h_{si} x(t)+n(t), i\in\{1,\ldots,N\},
\label{3}
\end{equation}
where $P_s$, $ x(t) $ and $n(t)\sim \mathcal{CN}(0,\sigma^2)$ denote the source transmit power, the normalized information signal and the Gaussian noise with zero mean and variance $\sigma^2$, respectively.

\begin{figure}
\centering
  	\includegraphics[width=3.5in]{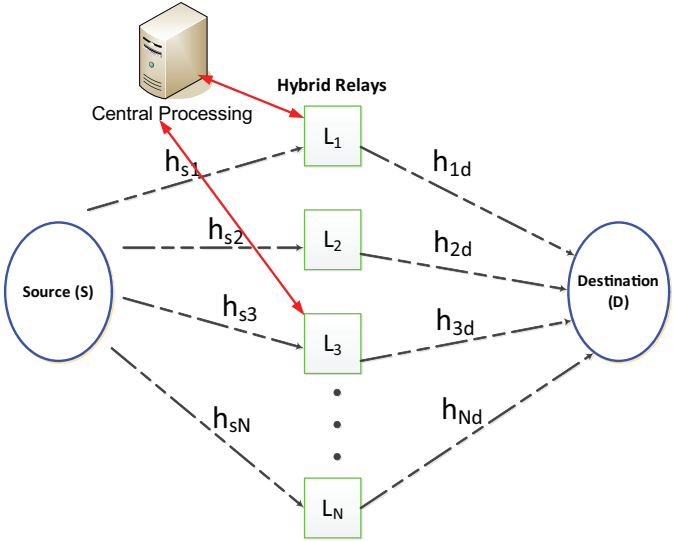}
   \caption{Schematic diagram for the system model. The centralized controller collects relay stored energy information and CSI at relay on the $S\to L_i$ link for all the relays, and then makes relay selection. Note that all the links from the relays to the centralized server are not shown to make the diagram clear.}
	\label{fig:system}
\end{figure}

We make the following assumptions regarding the cooperative system.
\begin{itemize}
\item We assume that the fading coefficients remain constant for the duration of a time slot, but vary independently from one slot to another. The schematic diagram for the system model is shown in Fig. \ref{fig:system} with a centralized controller, which collects stored energy and channel state information from all the relays and makes relay selection in every time slot.
\item We use time sharing protocol for SWIPT at relay nodes.
\item The transmissions are interference free and interference cannot be exploited for energy harvesting at relays.
\item We consider a half duplex communication system. The selected relay node $L_{i^*}$ decodes information from the source in a time slot of duration $T$ and forwards it to the destination in the next time slot. Hence, node $L_{i^*}$ is not available for information (or energy) reception in time slot $t+1$. However, all other nodes are available for information/energy reception from the source signals, thereby mimicking a FD communication system \cite{Ikhlef_TVT:2012}.
\item A single relay is selected for forwarding information to the destination. It is well known that transmission from multiple relays provides spatial diversity and improves data reliability at the cost of increased complexity. We focus on single relay selection schemes to reduce the complexity of the system, but all the proposed schemes can be extended to multiple relay transmission schemes in a straight forward manner.
\item The circuit energy consumed in energy harvesting and information decoding at relay is negligible. However, if circuit energy consumption is not negligible but same for all relay nodes, it will not have any effect on the design of proposed relaying schemes as all the relay nodes behave symmetrically. Therefore, circuit power consumption can be neglected in this problem without affecting problem and solution design.
\item We assume that the relay has no external power supply (i.e., powered by energy-harvesting from source-transmitted signal) while being deployed with a battery of infinite capacity. It is also assumed that there is negligible leakage within the time period of interest \cite{Yuan:ICC2014}.
\end{itemize}
The rate $R_{si}$ provided on the $S\to L_i$ link in a time slot $t$ is given by
\begin{equation}
R_{si}= \frac{1}{2}\log_2 \left( 1+|h_{si}|^{2}\frac{P_s}{\sigma^2} \right),
\label{eqn:rate_equation1}
\end{equation}
where the factor $\frac{1}{2}$ shows that the $S$-to-$D$ transmission requires two time slots. For a relay transmit power $P_{r}$, the rate $R_{id}$ provided by the $L_i\to D$ link is given by,
\begin{equation}
R_{id}= \frac{1}{2}\log_2 \left(1+|h_{id}|^{2}\frac{P_{r}}{\sigma^2}\right)~,
\label{eqn:rate_equation2}
\end{equation}
where $h_{id}$ denotes the channel coefficient between the relay node $L_i$ and the destination D.

All the nodes selected for energy transfer harvest energy from the source signal. Assuming $T=1$ and linear energy harvesting model, without loss of generality, the energy harvested by a relay node is given by
\begin{equation}
E_i^h=\eta P_s\left | h_{si} \right |^2,
\label{4}
\end{equation}
where $0\leq\eta\leq 1$ is the energy harvesting efficiency. In recent literature, non-linear energy harvesting models have been used for various studies, e.g., \cite{Boshkovska,Alevizos}, which account for non-linear effects in RF energy harvesting in practical systems. For simplicity, we assume a linear model here to focus more on the impact of CSI assumptions on system outage performance.

\subsection{Problem Formulation}
For a traditional grid powered DF relaying strategy, the outage probability $P_{\rm out}$ that a rate $R$ is not supported by the system is given by
\begin{align}
\label{outage}
P_{\rm out}&=\Pr\left(\min\left(R_{si^*},R_{i^*d}\right)<R\right)\\
&={\Pr} \left(\min\left(\frac{1}{2}\log_2 \left(1+ | h_{si^*}|^{2}\frac{P_s}{\sigma^2}\right), \right.\right. \nonumber\\
&\hspace{1.95cm}\left. \left. \frac{1}{2}\log_2 \left(1 + | h_{i^*d}|^{2}\frac{P_{r}}{\sigma^2}\right)\right) < R \right),
\label{eqn:DF}
\end{align}
where $L_{i^*}$ is the selected node for information relaying.
However, when the relays are powered by EH, unavailability of the harvested energy is an additional source of outage.

For a wireless powered relay network, the outage probability is given by,
\begin{equation}
P_{\rm out}= 1-\zeta_s\zeta_p\zeta_r,
\label{eqn:outage_EH}
\end{equation}
where $\zeta_s$ and $\zeta_r$ denote the success probability on the $S\to L_i$ and $L_i\to D$ links, respectively while $\zeta_p$ is the probability that the selected relay can support a transmit power $P_r$ to forward the information to the destination. $\zeta_p$ depends on the relay selection scheme and energy harvesting efficiency $\eta$. If there is always sufficient energy available for decoding, $\zeta_p\to 1$ and the EH system behaves as a grid powered system.

Our objective is to find a relay selection policy $\pi$ that minimizes the outage probability for the EH cooperative system with a fixed number of relays.
The optimization problem is formulated as:
\begin{eqnarray}
&\min_{\pi}&~ P_{\rm out}=\Pr\Big(\min(R_{si^*},R_{i^*d})<R\Big)\\
&{\rm s.t}.&~
\begin{cases}
\mathcal{C}_1:N=a,& a\in \mathbb{N}\\
\mathcal{C}_2:P_s = b\\
\mathcal{C}_3:R\geq 0\\
\mathcal{C}_4:E_{i^*}^{\rm st}(t)>E_r(t)
\end{cases}
\label{eqn:cons}
\end{eqnarray}
where $\mathcal{C}_1$ and $\mathcal{C}_2$ are constants representing a fixed number of relay nodes in the system and the fixed source power, respectively. $\mathcal{C}_3$ is the rate condition. $\mathcal{C}_4$ is neutrality constraint which implies that stored energy $E_{i^*}^{\rm st}$, for the node selected for forwarding $L_{i^*}$, must be greater than the energy $E_r$ required for transmission. We intend to find a relay selection policy $\pi$ which follows the constraints in ($\ref{eqn:cons}$) and minimizes network failure (outage probability).

A closed-form solution for the problem is difficult to achieve due to involvement of multiple relays and multiple energy queues at the relay nodes, which depend on stochastic fading channels. The energy queue states are mutually coupled for the outage analysis, making computation of $\zeta_p$ in (\ref{eqn:outage_EH}) difficult, and the analysis is not tractable for a large number of relays. Therefore, we propose heuristic relay selection schemes and evaluate the outage performance numerically. To give insight, we provide outage performance analysis for a certain relay $i$ depending on the number of energy queues.

\section{Relay Selection Schemes}
\label{sect:schemes}
To avoid the outage event, there must be at least a single relay available with sufficient energy to transfer data to the destination in time slot $t+1$. To minimize the outage probability, the relay selection scheme should aim at maximizing energy harvesting of the relays and minimizing the energy expenditure on the $L_i\to D$ link, thereby maximizing the network lifetime with minimum failure.

A simplified approach to model such a system is to assume the length of fading blocks long enough such that the channels on the $S\to L_i$ and $L_i \to D$ links remain constant for both reception and forwarding phases at the relay \cite{Yaming,Ju_IEEEWCOM:2014}. This has the advantage that both receive and transmit channels are known at the time of relay selection. In contrast, we assume that the relay reception and transmission occurs in two consecutive time slots.
Due to mutually independent channels on the $S\to L_i$ and $L_i\to D$ links, and the fact that the CSI for the $L_i \to D$ channel is not available when a relay is selected for forwarding at time $t$, relay selection becomes challenging.

The relay selection scheme is dictated by the availability of the CSI at relay on the both $S\to L_i$ and $L_i \to D$ links. Throughout this work, we assume that the CSI is available at the relay on the $S\to L_i$ link\footnote{The CSI estimation at the relay can be performed by pilot/data aided techniques.}.

Regarding the availability of the CSI at relay node on the $L_i \to D$ link, we consider the following two scenarios:
\begin{enumerate}
  \item The CSIT is not available on the $L_i \to D$ link.
  \item The CSIT is available on the $L_i \to D$ link.
\end{enumerate}

\subsection{No CSIT on the $L_i \to D$ Link}
When the CSIT is not available on the $L_i \to D$ link, no power allocation can be performed. Therefore, the selected relay transmits with a fixed power $P_{r}$ at time $t+1$.

At time $t$, the relay selection is performed. As the CSIT on $L_i \to D$ link is not available, relay selection is solely based on the available information at time $t$. As a single relay is selected at time $t$, this scheme is called single relay selection with No CSIT (SRS-NCSI).

A relay is selected for decoding information such that,
\begin{eqnarray}
\label{eqn:SRS_selection}
i^*=\arg \min_i R_{si},
\end{eqnarray}
where (\ref{eqn:SRS_selection}) is evaluated for a relay $i$ only if,
\begin{equation}\label{eqn:SRS}
   I(R_{si}>R)\times I\left(\frac{E_i^{\rm st}}{T}>P_r\right)=1
\end{equation}
such that,
\begin{equation}
I(R_{si}>R)=\begin{cases}0 & R_{si}<R\\
1 & R_{si}\geq R
\end{cases},
\label{eqn:SRS1}
\end{equation}
and
\begin{equation}
I\left(\frac{E_i^{\rm st}}{T}>P_r\right)=\begin{cases}0 & \frac{E_i^{\rm st}}{T}<P_r\\
1&\frac{E_i^{\rm st}}{T} \geq P_r
\end{cases}.
\label{eqn:SRS2}
\end{equation}
$E_i^{\rm st}$ denotes the stored energy for relay $L_i$.
The indicator functions $I(R_{si}>R)$ and $I\big(\frac{E_i^{\rm st}}{T}>P_r\big)$ in (\ref{eqn:SRS}) ensure that a selected node can decode the signal from the source and has energy to transmit with a fixed power $P_{r}$ in time slot $t+1$. Equation (\ref{eqn:SRS_selection}) selects the node with the minimum $R_{si}$ for information decoding out of the nodes which satisfy (\ref{eqn:SRS}).
The rationale behind the selection of the node with minimum $R_{si}$ is to provide relatively 'average' $S\to L_i$ channel for information decoding at relay as information decoding is already ensured by the condition $I(R_{si}>R)$. This implies that good $S\to L_i$ channels (which satisfy (\ref{eqn:SRS})) can be better utilized for energy harvesting as decoding on the best channel does not improve the outage performance as long as (\ref{eqn:SRS1}) is satisfied. If $R_{i^*d}<R$ for the selected relay or no relay satisfies (\ref{eqn:SRS}), an outage event occurs.

All the relays with $i\ne i^*$ harvest energy from the source signal such that,
\begin{equation}
\label{energy_NCSI1}
E_i^{\rm st}(t+1) = E_i^h(t)+ E_i^{\rm st}(t), \quad i\ne i^*~.
\end{equation}
The selected relay node $L_{i^*}$ is not a candidate for selection for both decoding and harvesting from the source signal in time slot $t+1$, which implies that $E_{i^*}^{\rm st}(t+1)=E_{i^*}^{\rm st}(t)$ and the stored energy for node $L_{i^*}$ after making a transmission at time $t+1$ is given by,
\begin{equation}
\label{energy_NCSI2}
E_{i^*}^{\rm st}(t+2)= E_{i^*}^{\rm st}(t+1)-P_{r}T~.
\end{equation}
It is clear from \eqref{eqn:DF} that $P_{\rm out}$ is determined by the rate provided by the 'bottleneck' link. However, the outage probability in WPCN is also characterized by the amount of energy harvested by the relay nodes. The harvested energy is a function of the source power, channel distribution and the energy harvesting efficiency $\eta$. When $\eta$ is large, very small number of relay nodes provide enough stored energy such that there is always a node available with enough energy to transmit on the $L_i \to D$  link and the outage probability in (\ref{eqn:outage_EH}) converges to (\ref{eqn:DF}). However, when $\eta$ or $N$ is small, (\ref{eqn:DF}) is only a lower bound on $P_{\rm out}$.

It is worthwhile pointing out that the results on outage performance associated with \eqref{eqn:SRS_selection} - \eqref{eqn:SRS2} will take into account the number of relays $N$ and randomness of each channel realization of all possible $S\to L_i$ and $L_i \to D$ links to calculate the amount of harvested energy. However, it is intractable to consider these features in the outage performance analysis since it is not feasible to capture a dynamic behavior of the EH and communication activities in one generalized analytical framework. Hence, our analysis will be performed for a relay $i$ at certain time slot (which can be regarded as a static process with initial inputs on the stored power available for successful end-to-end communication) with various levels of the harvested power available for information transmission.

\begin{proposition}
	\label{prop1}
	{\rm With respect to the defined relay selection strategy, the outage probability given in \eqref{eqn:outage_EH} can be rewritten in its closed-form as in \eqref{Pout_strategy1}, shown at the top of the current page. The generalized expression of the outage probability provided here considers not only the number of time slots allocated for EH purposes ($k$) but also how many times a relay $i$ was chosen for data transmission ($n$).}
\end{proposition}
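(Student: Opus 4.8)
The plan is to compute the three success-probability factors $\zeta_s$, $\zeta_p$, $\zeta_r$ appearing in the decomposition $P_{\rm out}=1-\zeta_s\zeta_p\zeta_r$ of \eqref{eqn:outage_EH}, treating a single tagged relay $i$ over a window in which $k$ slots are used for energy harvesting and $n$ slots are used for data transmission by that relay, and then assemble them into the claimed closed form. First I would exploit the Rayleigh fading assumption: since $h_{si}$ is i.i.d. Rayleigh, $|h_{si}|^2$ is exponentially distributed, so from \eqref{eqn:rate_equation1} the event $R_{si}\ge R$ is $\{|h_{si}|^2\ge \gamma\}$ with the SNR threshold $\gamma=(2^{2R}-1)\sigma^2/P_s$ (scaled by $d_{si}^{\alpha}$ for path loss). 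This immediately yields a closed form for $\zeta_s=\Pr(R_{si}\ge R)$ via the exponential CDF, and analogously $\zeta_r=\Pr(R_{id}\ge R)$ from \eqref{eqn:rate_equation2}, which here uses the fixed relay power $P_r$ because no CSIT is available on the $L_i\to D$ link.

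The substantive step is $\zeta_p$, the probability that the tagged relay has accumulated enough stored energy to transmit at power $P_r$, i.e. $\Pr(E_i^{\rm st}/T\ge P_r)$ as in \eqref{eqn:SRS2}. Here I would use the linear harvesting model \eqref{4}, $E_i^h=\eta P_s|h_{si}|^2$, together with the accounting recursions \eqref{energy_NCSI1} and \eqref{energy_NCSI2}: over the window the stored energy is the sum of $k$ independent harvested increments minus $n$ transmission debits of $P_rT$ each. Each increment $\eta P_s|h_{si}|^2$ is a scaled exponential, so the total harvested energy over $k$ slots is a sum of $k$ i.i.d. exponentials, i.e. Erlang/Gamma distributed with shape $k$. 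The condition $E_i^{\rm st}/T\ge P_r$ then becomes a tail probability of this Gamma variate evaluated at the shifted threshold $P_r(n+1)$ (the $+1$ coming from the pending transmission), which has a standard closed form as a finite sum $e^{-\lambda x}\sum_{j=0}^{k-1}(\lambda x)^j/j!$ via the regularized upper incomplete gamma function. I expect this energy-bookkeeping step — correctly reconciling the indices $k$ and $n$ with the slot recursions so the shift in the Gamma threshold is exact — to be the main obstacle, since it is where the dynamic EH process must be frozen into a static per-relay computation.

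Finally I would multiply the three independent factors, using the mutual independence of the $S\to L_i$ and $L_i\to D$ channels and of the harvesting realizations asserted in the system model, to obtain $\zeta_s\zeta_p\zeta_r$ and hence $P_{\rm out}=1-\zeta_s\zeta_p\zeta_r$ in the form \eqref{Pout_strategy1}. A small check I would run is the limiting behavior flagged in the text: as $\eta$ grows the Gamma tail $\zeta_p\to 1$, recovering the grid-powered expression \eqref{eqn:DF}, which serves as a sanity test on the threshold bookkeeping. The remaining manipulations (substituting the Rayleigh CDFs, collecting path-loss constants, and writing the incomplete-gamma sum explicitly) are routine and I would not grind through them here.
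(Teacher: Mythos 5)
Your proposal matches the paper's own proof in Appendix \ref{sect:Appen1} essentially step for step: the same factorization $P_{\rm out}=1-\zeta_s\zeta_p\zeta_r$ into two Rayleigh (exponential) tail probabilities $\exp(-\lambda v\sigma^2/P_s)$ and $\exp(-\lambda v\sigma^2/P_r)$, multiplied by an Erlang/Gamma tail probability for the accumulated harvested energy, with the same threshold shift to $(n+1)P_r/(\eta P_s)$ accounting for the $n$ past debits plus the pending transmission, and the same independence argument to combine them. The only discrepancy is the bookkeeping you yourself flagged: in the paper $k$ is the time index, so the stored energy is $S_{k-1}=\sum_{l=1}^{k-1}X_l$ and the Gamma shape parameter in \eqref{Pout_strategy1} is $k-1$, whereas you count $k$ harvesting increments giving shape $k$ --- a pure relabeling you would need to align with the stated closed form.
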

\begin{proof}
	See Appendix \ref{sect:Appen1}.
\end{proof}
\begin{figure*}[!t]
	\begin{align}
	\label{Pout_strategy1}
	P_{\rm out} = \begin{cases}
	\begin{array}{ll}
	1, & \text{for}~k=1,\\
	1 - \exp\left( -\frac{\lambda v \sigma^2}{P_r} \right)
	\exp\left( - \frac{\lambda v\sigma^2}{P_s}\right) \exp\left(-\lambda\frac{(n+1)P_r }{\eta P_s}\right), & \text{for}~k=2,\\
	1 - \exp\left( -\frac{\lambda v \sigma^2}{P_r} \right)
	\exp\left( - \frac{\lambda v\sigma^2}{P_s}\right) \left(1 - \frac{1 - \left(\lambda \frac{(n+1)P_r }{\eta P_s} + 1 \right)\exp\left(-\lambda \frac{(n+1)P_r }{\eta P_s}\right)}{\lambda}\right), & \text{for}~k=3,\\
	1 - \exp\left( -\frac{\lambda v \sigma^2}{P_r} \right)
	\exp\left( - \frac{\lambda v\sigma^2}{P_s}\right) \left(1 - \frac{\gamma_{inc}\left(k-1,\lambda \frac{(n+1)P_r }{\eta P_s}\right)}{\Gamma(k-1)}\right), & \text{for}~k>3.
	\end{array}
	\end{cases}
	\end{align}
	\hrulefill
\end{figure*}


\subsection{CSIT Available on the $L_i \to D$  Link}
\label{sect:scheme_ACSI}
In the case when the CSIT is available at the relay on the $L_i \to D$ link, the relay can benefit from this information through power allocation. CSI can be made available at relay node either by using data aided estimation or making use of explicit training sequences. As we show later in this section, this helps improve system outage performance and overhead in channel estimation is justified. This feature is commonly used in 4G and 5G systems, especially when there are no latency constraints.

The required transmit power to make a successful transmission for a relay $L_i$ is computed from (\ref{eqn:rate_equation2}), and given by,
\begin{equation}
P_{id} = \frac{(2^{2R}-1)\sigma^2}{|h_{id}|^{2}}.
\label{eqn:power}
\end{equation}
This scenario provides more flexibility for relay selection. However, the CSIT on the $L_i \to D$ link is available only at time $t+1$ due to i.i.d. channel assumption and the relay is selected at time $t$.

The main challenges in relay selection are:
\begin{enumerate}
  \item If the relay selection is made based on the channel quality on the $S\to L_i$ link, the selected relay node may not have enough energy to transmit on the $L_i \to D$ link at time $t+1$. At the same time, the use of channel quality on the $L_i \to D$ channel will not be optimal as the selected relay $L_{i^*}$ may not necessarily have the best channel at time $t+1$.
  \item If the relay selection is based on the stored energy maximization, the availability of the CSI on $S\to L_i$ and $L_i \to D$ links at the relay node is not exploited.
\end{enumerate}
To take the advantage of CSI availability at different times, we propose a two step relay selection algorithm.\\
\textbf{Phase I}: In the first phase, a subset $\Gamma$ of (maximum) $M$ relays\footnote{To avoid confusing it with multiple relay forwarding, please note that only one relay will be selected for forwarding after phase 2 of the scheme.} is selected out of $N$ relays for decoding information. As the CSIT on the $L_i \to D$ link is not known at time $t$, more than one relay decode information to provide multiuser diversity for the transmission on the $L_i \to D$ link. It is worth noting that selecting a single relay in phase 1 makes available CSIT on the $L_i \to D$ link at time $t+1$ useless as all other relays cannot be used for forwarding in phase 2 of the scheme. Due to multiple relay selection in first phase, this scheme is termed as Multiple Relay Selection with available CSI (MRS-ACSI).

The selection for the forwarding set $\Gamma$ is made such that,
\begin{eqnarray}
\label{eqn:csi_gamma2}
\Gamma^{K\times 1}=\{i:\gamma_i\leq \gamma_K\}
\end{eqnarray}
where (\ref{eqn:csi_gamma2}) is evaluated for a relay $i$ only if,
\begin{equation}
  I(R_{si}>R)=1.
\end{equation}
$\gamma_K$ denotes the fading channel with $K^{\rm th}$ smallest amplitude, selected out of $U$ nodes satisfying $I(R_{si}>R)$. Cardinality $K$ of $\Gamma$ is limited by $\min(M,U)$, where $M\leq N$ is a system parameter for the scheme. Equation (\ref{eqn:csi_gamma2}) states that $K$ relays with the smallest fading channels are dedicated for decoding information.
This metric chooses the relay nodes with the weakest channels, but capable of decoding the information. This implies that the rest of the $N-K$ relays harvest energy from the source signals on good channels and the stored energy for the nodes increases at a faster rate.\\
\textbf{Phase II:} In the second phase of the relay selection algorithm, the forwarding relay from the set $\Gamma$ at time $t+1$ is selected such that,
\begin{eqnarray}
\label{eqn:MRS_Ph2}
i^*=\arg \min_{i\in \Gamma} P_{id}
\end{eqnarray}
where (\ref{eqn:MRS_Ph2}) is evaluated for a relay $i$ only if,
\begin{equation}
  I\Big(\frac{E_i^{\rm st}}{T}>P_{id}\Big)=1
\label{eqn:MRS_cons}
\end{equation}
such that,
\begin{equation}
I\Big(\frac{E_i^{\rm st}}{T}>P_{id}\Big)=\begin{cases}
1& \frac{E_i^{\rm st}}{T}\geq P_{id}\\
0&\frac{E_i^{\rm st}}{T}<P_{id}
\end{cases}.
\label{eqn:MRS_ph2cond}
\end{equation}
The scheme selects the relay with the best transmit channel out of the relays, which have enough stored energy for transmission as in constraint (\ref{eqn:MRS_cons}). This ensures transmission with minimum expenditure and is the optimal decision for the relays in $\Gamma$. Note that $P_{id}$ is calculated individually for every relay $L_i$ via (\ref{eqn:power}). If the cardinality of $\Gamma$ set is zero or no relay in the set satisfies (\ref{eqn:MRS_cons}), outage occurs.
The pseudocode for the second phase of the relay selection algorithm is presented in Algorithm 1.

\begin{algorithm}
\label{algorithm}
\caption{Routine for Relay Selection}
\KwIn{$\mat{h}_{d},\mat{E}^{\rm st}$}
$\mat{E}^{\rm st}$= Vector of stored energies for the relays $i\in \Gamma$\;
$\mat{h}_d=$ Vector of fast fading for the relays $i\in \Gamma$\;
$K$ = Size of vector $\Gamma$\;
\tcc{Initialize outage flag.}
$P_{\rm out}=0$\;
\tcc{Compute the required power vector $\mat{P}_{id}$ to transmit rate $R$ for $i\in \Gamma$.}
\For{i=1 \KwTo K}{
${P}_{id}=\frac{(2^{2R}-1)\sigma^2}{|h_{id}|^{2}}$\;
}
$\Gamma_s$= Sort $\Gamma$ in increasing order w.r.t. $\mat{P}_{id}$\;
$i^*=0$\;
\For{i=1 \KwTo K}{
temp =$\Gamma_s(i)$\;
 \If{$\frac{\mat{E}^{\rm st}(\rm temp)}{T}\geq \mat{P}_{id}(\rm temp)$}{
 break\;
 $i^*= i$\;
}
}
\If {$i^*==0$}{
$P_{\rm out}$ = 1\;
}
\Return{$i^*,P_{\rm out}$;}

\end{algorithm}



After transmission, the stored energy for the node $L_{i^*}$ is updated as,
\begin{eqnarray}
E_{i^*}^{\rm st}(t+2)= E_{i*}^{\rm st}(t+1)-P_{i^*d}(t+1)T~.
\end{eqnarray}
The rest of the nodes harvest and store energy depending on the received signal strength from the source such that
\begin{eqnarray}
E_j^{\rm st}(t+1)=
\begin{cases}
E_j^h(t)+ E_j^{\rm st}(t),&   j \notin \Gamma\\
E_j^{\rm st}(t),&   j \in \Gamma,j \ne i^*
\end{cases},
\end{eqnarray}
with the nodes $j\in \Gamma,j\ne i$ not able to harvest energy as they were reserved for decoding.

\begin{figure*}[!b]
	\hrulefill
	\begin{align}
	\label{Pout_strategy2}
	P_{\rm out} = \begin{cases}
	\begin{array}{ll}
	1, & \text{for}~k=1,\\
	1 - 2
	\sqrt{ \frac{\lambda (n+1)v \sigma^2}{\eta P_s}}
	\exp\left(- \frac{\lambda v \sigma^2}{P_s}\right)
	K_1\left( 2\sqrt{ \frac{\lambda (n+1)v \sigma^2}{\eta P_s}} \right), & \text{for}~k=2, \\
	1 - \frac{1}{\Gamma(k-1)}\left(\frac{\lambda (n+1)v \sigma^2}{\eta P_s}\right)^{k-1} \exp\left(- \frac{\lambda v \sigma^2}{P_s}\right) H^{2,0}_{0,2}\left(\frac{\lambda (n+1)v \sigma^2}{\eta P_s} \left|
	\begin{array}{c}
	-\\
	(0,1),~(-(k-1),1)
	\end{array}
	\right.\right), & \text{for}~k\ge 3.
	\end{array}
	\end{cases}
	\end{align}
\end{figure*}
\begin{proposition}
	\label{prop2}
	{\rm According to the chosen relay selection strategy, the outage probability given in \eqref{eqn:outage_EH}, when CSIT is available at the relay, can be expressed in its closed-form as in \eqref{Pout_strategy2}, shown at the bottom of the next page. The generalized solution for the outage probability provides an insight into how the number of time slots allocated for EH purposes ($k$) as well as the number of time slots devoted for data transmission ($n$) will affect the outage performance of the communication system under consideration.}
\end{proposition}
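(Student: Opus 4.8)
The plan is to mirror the argument behind Proposition \ref{prop1}, starting from the factorization $P_{\rm out}=1-\zeta_s\zeta_p\zeta_r$ in \eqref{eqn:outage_EH} and then exploiting the single feature that distinguishes the CSIT-available scheme: the forwarding power $P_{id}=v\sigma^2/|h_{id}|^2$ in \eqref{eqn:power} is now a random variable rather than a fixed $P_r$, with $v=2^{2R}-1$. First I would dispose of the source link. For i.i.d.\ Rayleigh fading $|h_{si}|^2$ is exponential with rate $\lambda$, so the decoding condition $R_{si}>R$ is equivalent to $|h_{si}|^2>v\sigma^2/P_s$, giving $\zeta_s=\exp(-\lambda v\sigma^2/P_s)$. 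This reproduces the common factor $\exp(-\lambda v\sigma^2/P_s)$ appearing in every branch of \eqref{Pout_strategy2}.

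The crucial step is to observe that, because the relay now adapts its transmit power to exactly meet rate $R$ whenever it can afford it, the relay--destination success $\zeta_r$ and the energy feasibility $\zeta_p$ collapse into the single event ``the stored energy suffices to pay $P_{id}$.'' Modelling the energy accumulated over the $k-1$ harvesting slots as $\eta P_s\sum_{j=1}^{k-1}|h_{sj}|^2$, where $\sum_{j=1}^{k-1}|h_{sj}|^2\sim\mathrm{Gamma}(k-1,\lambda)$ is a sum of i.i.d.\ source--relay gains, and charging each of the $n$ prior transmissions and the current one at the current required power, the feasibility condition becomes $\eta P_s\sum_j|h_{sj}|^2\ge(n+1)v\sigma^2/|h_{id}|^2$, i.e.
\begin{equation*}
W:=\Big(\sum_{j=1}^{k-1}|h_{sj}|^2\Big)\,|h_{id}|^2\ \ge\ c,\qquad c=\frac{(n+1)v\sigma^2}{\eta P_s}.
\end{equation*}
Hence $\zeta_p\zeta_r=\Pr(W\ge c)$, where $W$ is the product of an independent $\mathrm{Gamma}(k-1,\lambda)$ variable and the exponential $|h_{id}|^2$. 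This product representation is exactly what replaces the plain incomplete-gamma factor of Proposition \ref{prop1}.

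I would then evaluate $\Pr(W\ge c)$ case by case. For $k=1$ there is no harvesting slot, the Gamma sum is $0$, so $\Pr(W\ge c)=0$ and $P_{\rm out}=1$. For $k=2$ the Gamma factor is a single exponential; conditioning on $|h_{id}|^2=y$ gives the inner survival $\exp(-\lambda c/y)$, and the remaining integral $\lambda\int_0^\infty\exp(-\lambda c/y-\lambda y)\,dy$ is evaluated by the standard identity $\int_0^\infty\exp(-\beta/x-\gamma x)\,dx=2\sqrt{\beta/\gamma}\,K_1(2\sqrt{\beta\gamma})$, which produces the $K_1$ term of \eqref{Pout_strategy2}. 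For $k\ge3$ I would write the Gamma survival as the upper incomplete gamma $\Gamma(k-1,\lambda c/y)/\Gamma(k-1)$, integrate it against $\lambda e^{-\lambda y}$, and recognise the resulting Mellin-type integral of an incomplete gamma against an exponential as a Fox $H$-function; equivalently, one may invoke the Mellin--Barnes representation of the survival function of a product of a Gamma and an exponential variable. Either route yields $H^{2,0}_{0,2}$ with the characteristic parameter tuple $(0,1),(-(k-1),1)$, and the $K_1$ branch is recovered as its $k-1=1$ special case.

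The main obstacle is the $k\ge3$ branch: expressing the integral of an incomplete gamma function against the exponential channel density as a single Mellin--Barnes contour integral and matching it term-by-term to the Fox $H$-function definition, so that the parameters $(0,1),(-(k-1),1)$ and the prefactor $(\lambda c)^{k-1}/\Gamma(k-1)$ in \eqref{Pout_strategy2} emerge correctly. This is special-function bookkeeping rather than a conceptual difficulty; everything upstream---the exponential source factor, the collapse of $\zeta_r$ and $\zeta_p$, and the reduction to the product variable $W$---is elementary once the adaptive-power observation has been made.
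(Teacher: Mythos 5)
Your proposal matches the paper's own proof in all essentials: you factor out the source-link success $\exp(-\lambda v\sigma^2/P_s)$, make the same key observation that adaptive power $P_{id}=v\sigma^2/|h_{id}|^2$ collapses the relay--destination success and energy-feasibility events into the single condition $S_{k-1}|h_{id}|^2\ge (n+1)v\sigma^2/(\eta P_s)$, and evaluate its survival probability case by case ($k=1$ trivial, $k=2$ via the standard $K_1$ Bessel integral, $k\ge 3$ via the Fox $H$-function), exactly as the paper does in Appendix~\ref{sect:Appen2}. The only cosmetic difference is that you condition on $|h_{id}|^2$ where the paper conditions on $S_l$, which is the same double integral computed in the other order.
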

\begin{proof}
	See Appendix \ref{sect:Appen2}.
\end{proof}

There is a tradeoff involved with the selection of parameter $M$ for a fixed $N$. If $M$ is large, there is greater chance of finding a good channel for transmission on the $L_i \to D$ link, but fewer relays are available for EH and the relay system becomes power limited. On the contrary, if $M$ is too small, less multiuser diversity is exploited on the $L_i \to D$  link, but more relays harvest energy. Thus, for the proposed scheme, it is important to optimize $M$ for a given $N$ and $\eta$.

Given that we have a multiple relay selection (MRS-ACSI) policy $\pi(M,N)$, the parameter optimization problem is formulated by
\begin{eqnarray}
M^*(N,\eta,R) &=& \arg\min_{\pi(M,N),~0<M\leq N} P_{\rm out},
\end{eqnarray}
with the same constraints as in (\ref{eqn:cons}). We determine the optimal $M$ for the proposed scheme numerically in Section \ref{sect:results}.

\section{Numerical Results}
\label{sect:results}

We numerically evaluate the performance of the proposed schemes in this section. A Rayleigh fading channel with mean one is considered on the $S\to L_i$ and $L_i \to D$ links. Time slot $T$ is assumed to be one while noise variance $\sigma^2=1$. $P_s$ is fixed to $10$ dBW throughout.
\begin{figure}[!t]
	\centering
	\includegraphics[width=3.5in]{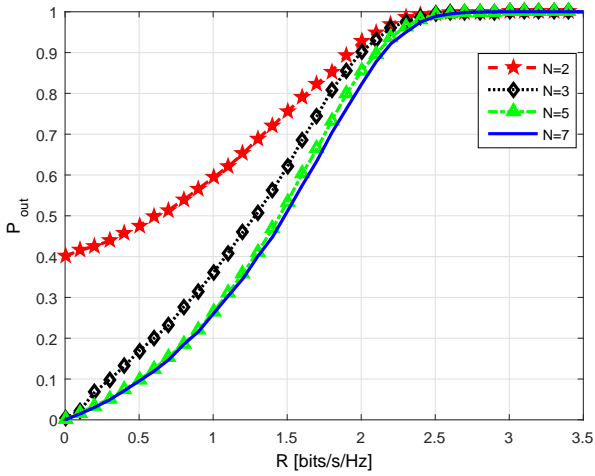}
	\caption{Outage probability for the SRS scheme for different $N$ and $\eta=0.7$.}
	\label{fig:SRS}
\end{figure}

\begin{figure}[!t]
	\centering
	\subfloat[Relay selection probability when $n=0$.]{
		\label{subfig:NCSI_prob_n0}
		\includegraphics[width=3.5in]{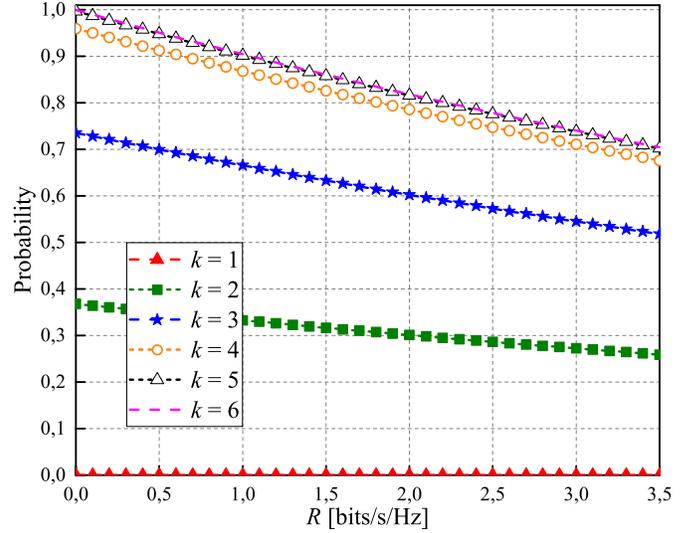}}\\
	\subfloat[Relay selection probability when $n=1$.]{
		\label{subfig:NCSI_prob_n1}
		\includegraphics[width=3.5in]{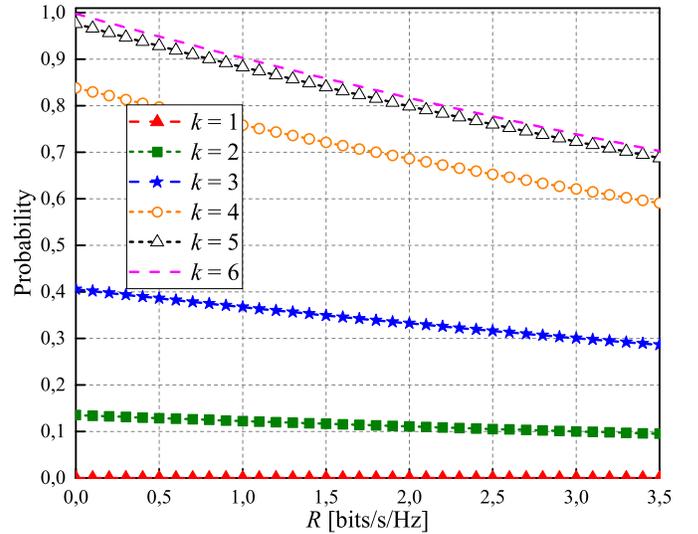}}
	\caption{Relay selection probability versus a rate threshold $R$ for the case when CSIT on the $L_i \to D$ link is not available at the relay and $\eta = 1$ for different number of time slots allocated for EH purposes, $k$, and for data transmission, $n$.}
	\label{fig:NCSI_prob}
\end{figure}
\begin{figure*}[!t]
	\centering
	\subfloat[Outage probability when $n=0$.]{
		\label{subfig:NCSI_prob1_n0}
		\includegraphics[width=3.5in]{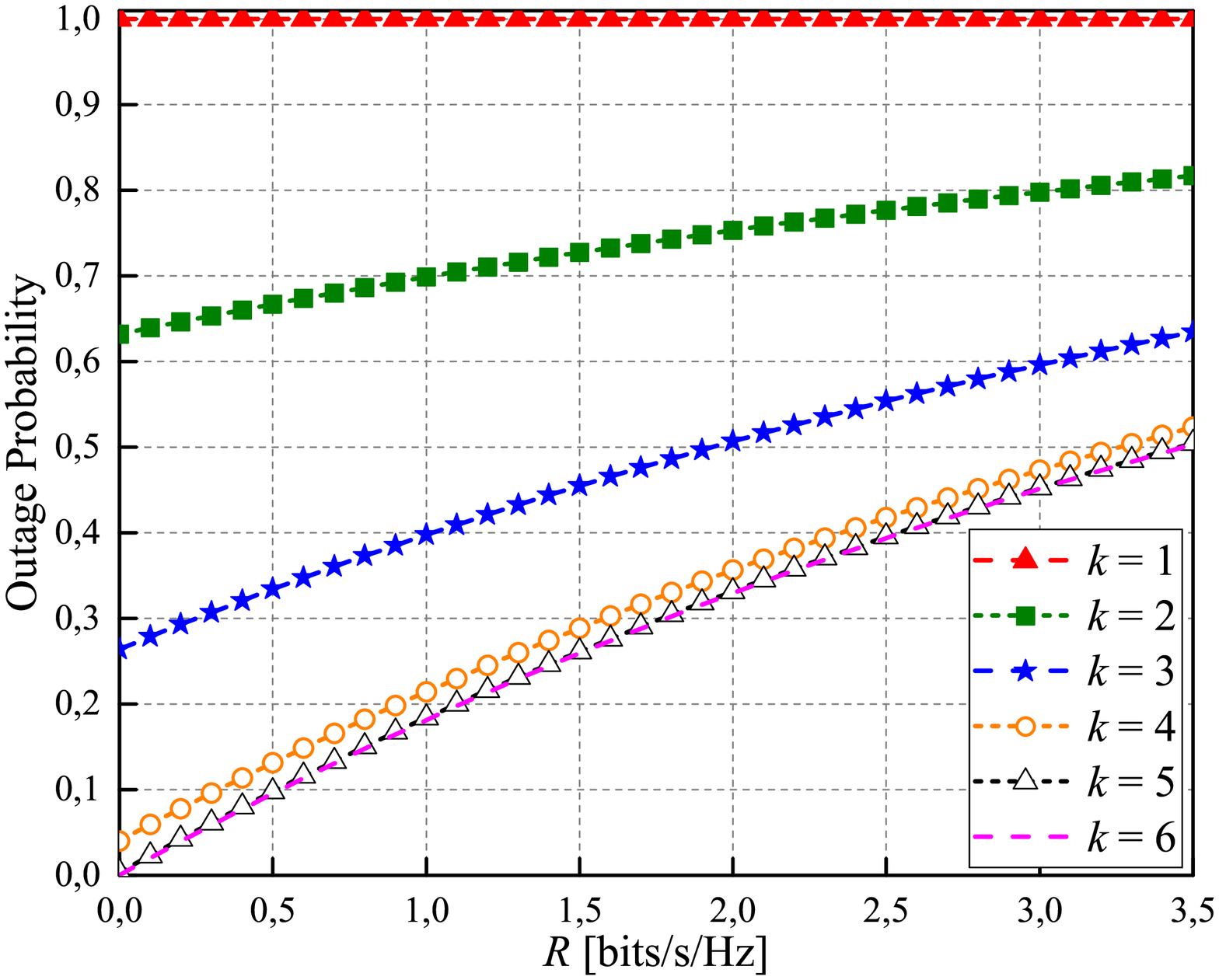}}
	\subfloat[Outage probability when $n=1$.]{
		\label{subfig:NCSI_prob1_n1}
		\includegraphics[width=3.5in]{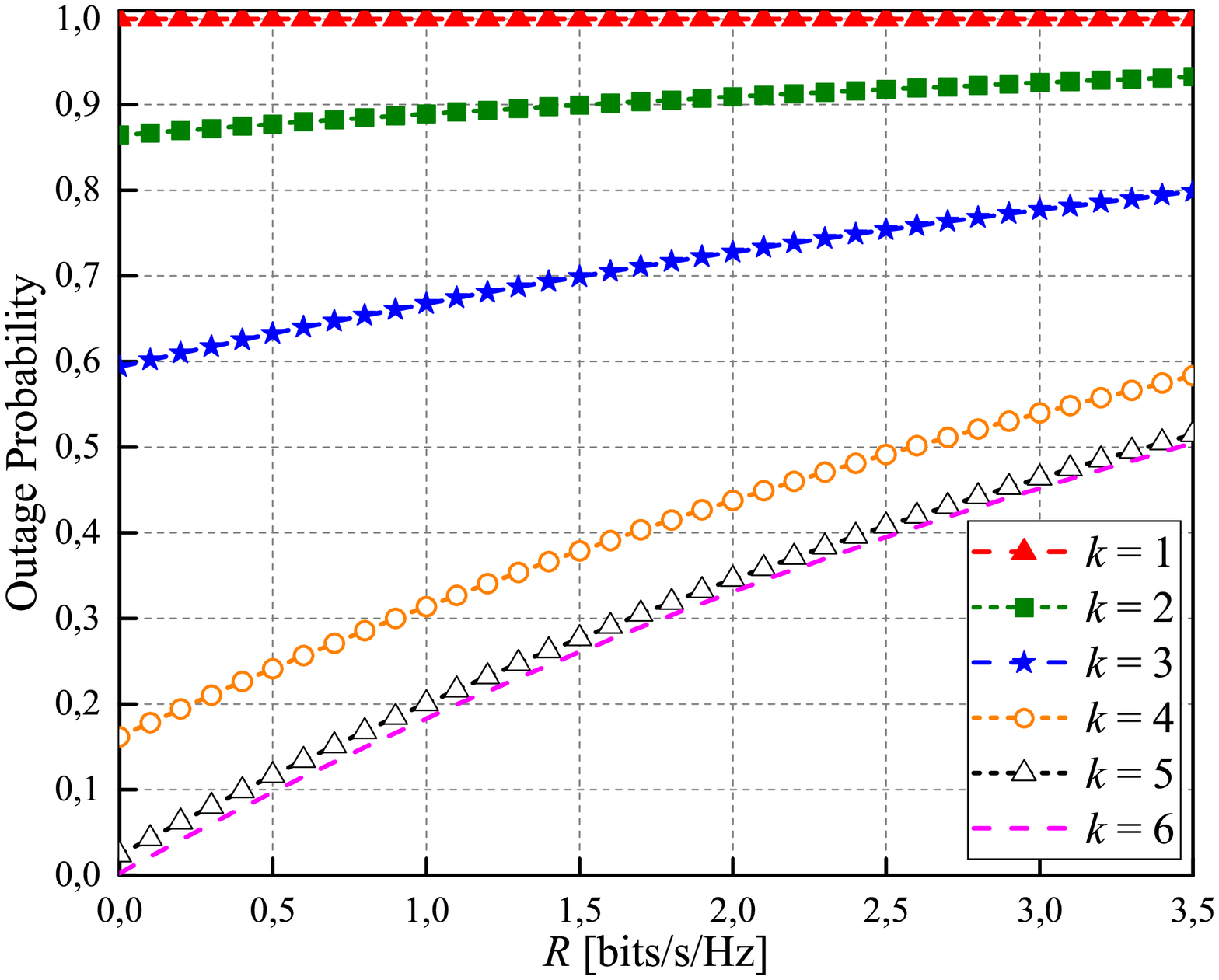}}
	\caption{Outage probability of a certain relay selected for data transmission versus a rate threshold $R$ for the case when CSIT on the $L_i \to D$ link is not available at the relay and $\eta = 1$ for different number of time slots allocated for EH purposes, $k$, and for data transmission, $n$.}
	\label{fig:NCSI_prob1}
\end{figure*}

In Fig. \ref{fig:SRS}, we compute the outage probability for the SRS-NCSI scheme for different number of available relay nodes. The CSIT is not available at the relay node on the $L_i \to D$ link and the selected relay node transmits with a fixed power 10 dBW. The outage probability decreases as the number of relays increases, as expected. However, when $N$ is sufficiently large, any further increase in $N$ does not benefit. For a small $N$, there is high probability that the selected relay stored energy $E_i^{\rm st}$ is not enough to transmit successfully on the $L_i \to D$ channel and 'power limitation' of the relay node contributes to the outage significantly. As $N$ increases, the outage performance improves. At $N=7$, the effect of power limitation vanishes completely and $N>7$ does not help to decrease outage. The system behaves like a grid powered system and the outage performance is given by (\ref{eqn:DF}).

In Fig. \ref{fig:NCSI_prob}, we demonstrate the probability that a certain relay will be selected for information transmission for the SRS-NCSI scheme (i.e., the CSIT is not available at the relay node on the $L_i \to D$ link and the selected relay node transmits with a fixed power $P_r = 10$ dBW) for different amounts of the harvested energy stored in the battery. The amount of energy is related to the number of time slots dedicated to EH purposes and to the number of time slots when this relay is acting as a communication node $n$. To investigate the number of time slots dedicated for EH, we assume $n = 0$, i.e., the relay has not been selected for information transmission yet. This assumption is reasonable for the case when the number of available relays is relatively large. The case $k=1$ means that the communication is just initiated and no power is available at the relays, and they operate in the EH mode only. It can be also observed that the probability of relay selection increases when the amount of the stored energy increases, as expected. However, further increase of the stored energy does not contribute to the probability of relay selection significantly, i.e., it starts saturating after $k \ge 5$. Next, for the case when $n = 1$ (a relay $i$ has been selected once), the selection probability severely deteriorates when the battery charge is low which, in turn, corresponds to the outage probability degradation shown in Fig. \ref{fig:NCSI_prob1}.

Next, we compare SRS-NCSI scheme with other similar available schemes. As a benchmark, we consider two commonly used schemes. In the first scheme, the relay is selected such that \cite{majid:ietsp16,Yaming,Krikidis_COML:2012},
\begin{equation}
i^* = \arg\max_i (E_i^{\rm st}-P_r)^+~\times I(R_{si}>R),
\label{eqn:bestenergy}
\end{equation}
with the notation $x^+=\max(x,0)$.
We denote it by SRS-NCSI-best-energy scheme, where the relay with the largest residual energy is selected for transmission.

\begin{figure}
\centering
  	\includegraphics[width=3.5in]{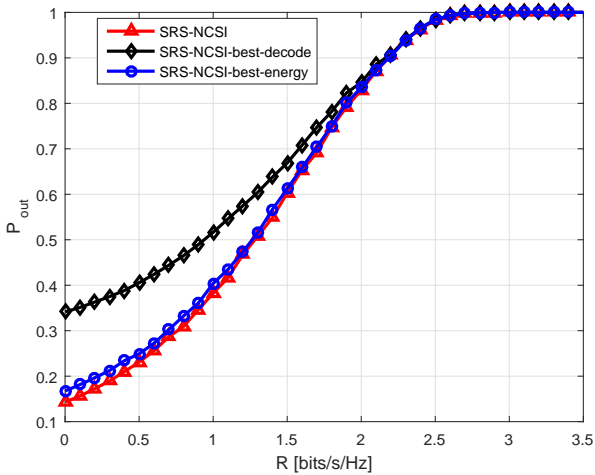}
   \caption{Comparison of the relay selection schemes for the case when $N=10$ and $\eta=0.1$, and the CSIT is not available at relay on the $L_i \to D$  link.}
	\label{fig:SRS_comp}
\end{figure}

The second scheme selects the relay which has the best chance of decoding on the $S\to L_i$ channel. The concept is similar to relay antenna selection scheme in \cite{Krikidis_TCOM2:2014}, where the antennas with large channel gains are selected for decoding. Thus,
\begin{equation}
i^* = \arg\max_i R_{si}\times I(R_{si}>R)~.
\label{eqn:bestdecoding}
\end{equation}
We denote this scheme by SRS-NCSI-best-decoding. Please note that the forwarding in all schemes is made only if $E_i^{\rm st}>P_r$, which saves transmit energy on unsuccessful transmission.

From Fig. \ref{fig:SRS_comp}, we see that SRS-NCSI outperforms the other schemes. The SRS-NCSI-best-energy performs better than SRS-NCSI-best-decoding because the major cause of outage is insufficient energy to forward data for the selected relay. The best channel selection on the $L_i \to D$ link is not optimal as decoding is already ensured by the condition $I(R_{si}>R)$.

\begin{figure*}[!t]
	\centering
	\subfloat[Outage probability when $n=0$.]{
		\label{subfig:NCSI_prob2_n0}
		\includegraphics[width=3.5in]{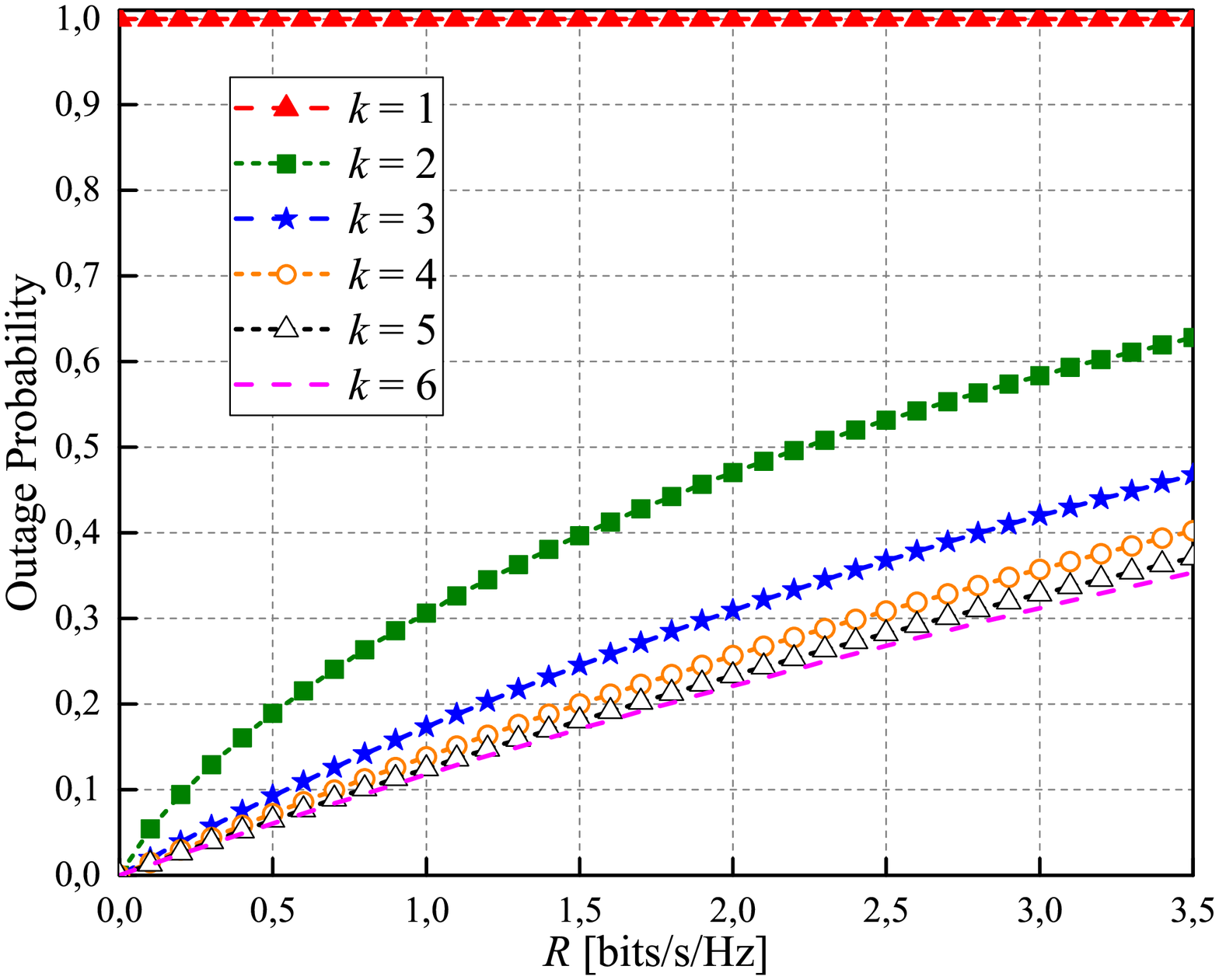}}
	\subfloat[Outage probability when $n=1$.]{
		\label{subfig:NCSI_prob2_n1}
		\includegraphics[width=3.5in]{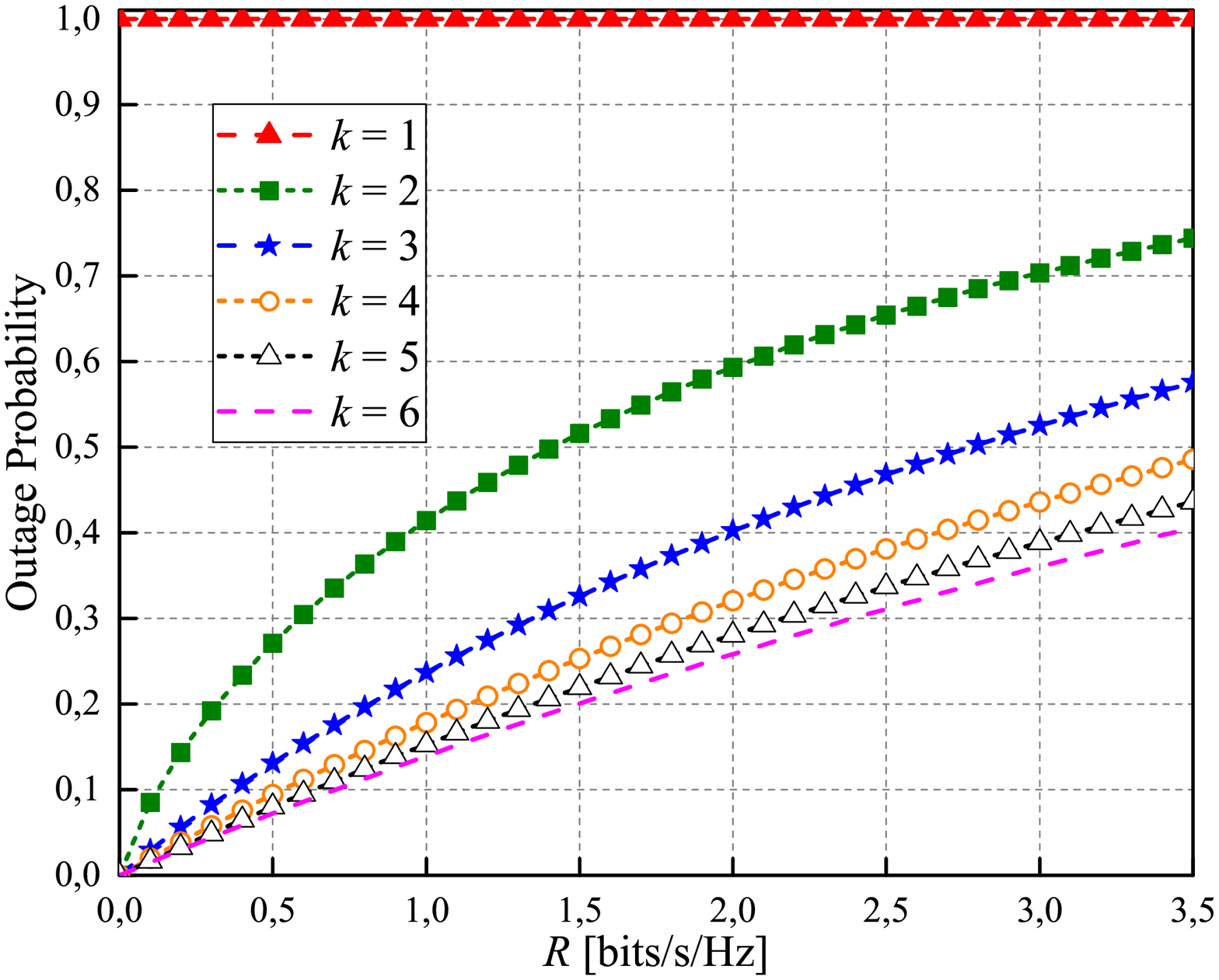}}
	\caption{Outage probability of a certain relay selected for data transmission versus a rate threshold $R$ for the case when CSIT on the $L_i \to D$ link is available at the relay and $\eta = 1$ for different number of time slots allocated for EH purposes, $k$, and for data transmission, $n$.}
	\label{fig:NCSI_prob2}
\end{figure*}

In Fig. \ref{fig:NCSI_prob2}, we compute the outage probability of the selected relay for the SRS scheme and various $k$ when CSIT on the $L_i \to D$ link is available at the relay node which transmits with $P_{id} = \frac{v \sigma^2}{|h_{id}|^2}$. The outage performance is shown versus the data rate threshold for $n = \{0,1\}$. In the case when CSIT is available, the outage probability outperforms that of the case when a relay is not aware of the channel properties of the $L_{i} \to D$ link. It is clear that the performance deteriorates when $n$ is increased, as expected. It is worthwhile pointing out that the result of the scenario with available CSIT always outperforms one when no CSIT on the $L_{i} \to D$ link is available at the relay (see Fig. \ref{fig:NCSI_prob1} for comparison).

\begin{figure}
\centering
  	\includegraphics[width=3.5in]{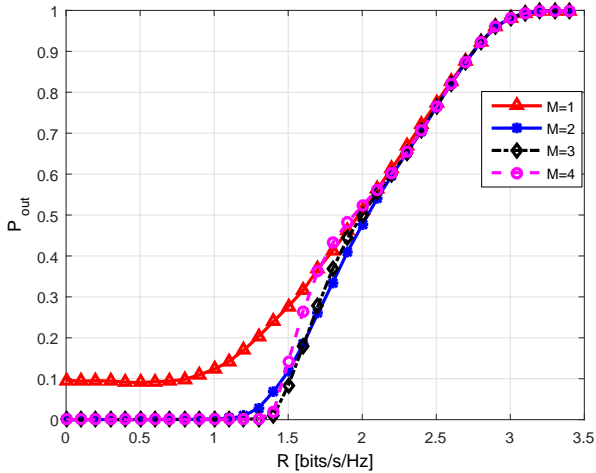}
   \caption{Outage probability for the MRS scheme for $N=10$ and $\eta=0.1$.}
	\label{fig:MRS}
\end{figure}

In Fig. \ref{fig:MRS}, we plot the outage probability for the MRS-ACSI scheme and compute the optimal value of parameter $M$ for a given $N$ and $\eta$. If $M$ is too small, multiuser diversity is not exploited effectively on the $L_i \to D$ link. On the contrary, if $M$ is too large, EH is not enough for the relays to store enough energy to avoid outage events. We observe that $M=3$ is optimal at small $R$, while $M=2$ is optimal at large $R$. This is attributed to the fact that large rate requirements require more relay nodes to harvest energy to have sufficient energy for successful transmissions to the destination.

\begin{figure}
\centering
  	\includegraphics[width=3.5in]{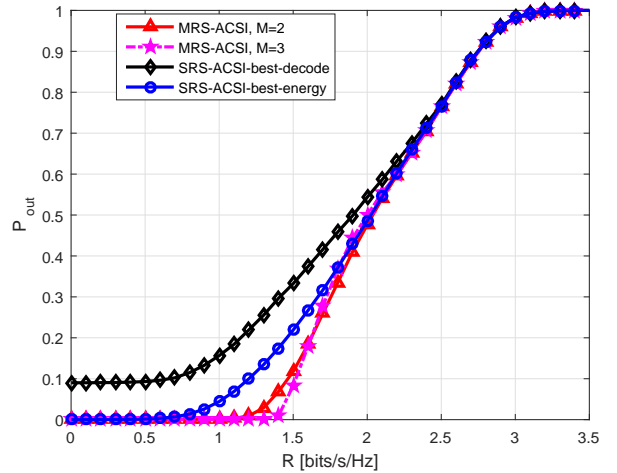}
   \caption{Comparison of the relay selection schemes for the case when $N=10$ and $\eta=0.1$, and the CSIT is available at the relay on the $L_i \to D$  link.}
	\label{fig:MRS_comp}
\end{figure}

\begin{figure}
	\centering
	\includegraphics[width=3.5in]{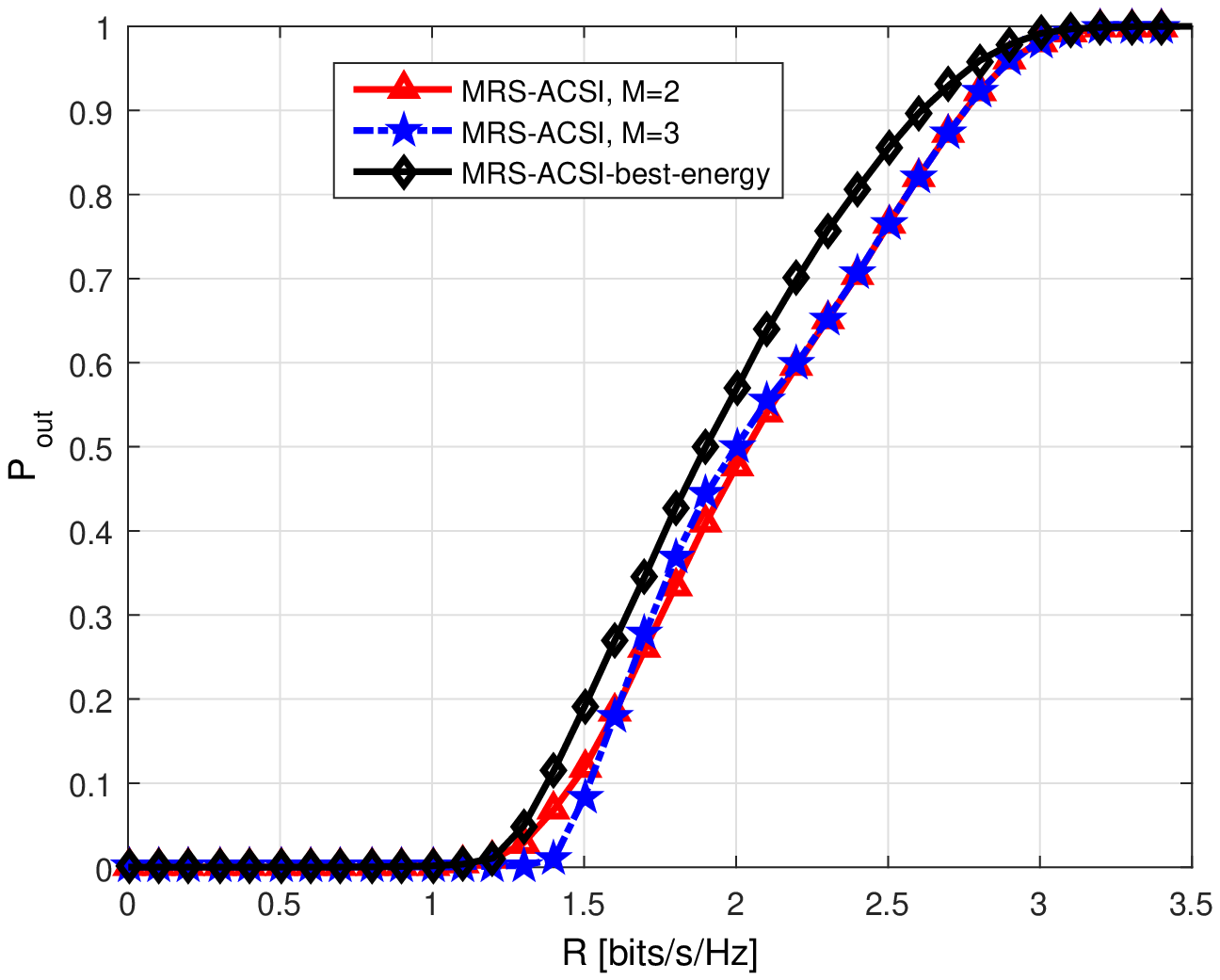}
	\caption{Comparison of the relay selection schemes for the case when $N=10$ and $\eta=0.1$, and the CSIT is available at the relay on the $L_i \to D$  link. Multiuser diversity is exploited using different relay selection metrics.}
	\label{fig:MRS_comp_group}
\end{figure}

In Fig. \ref{fig:MRS_comp}, we compare the performance of the MRS scheme with the two schemes mentioned above. We assume that the CSIT is available at relay on the $L_i \to D$ link and power $P_{id}$ is allocated for the selected relay by (\ref{eqn:power}). However, due to unavailability of $P_{id}(t+1)$ at time $t$, we eliminate transmit power term from (\ref{eqn:bestenergy}) and evaluate the metric,
\begin{equation}
i^* = \arg\max_i E_i^{\rm st}(t)~\times I(R_{si}>R)~.
\end{equation}
It is worth noting that the relay $L_{i^*}$ is not available for harvesting at time slot $t$ even if $E_i^{\rm st}<P_{id}(t+1)$ because $P_{id}$ can only be calculated at instant $t+1$ due to delayed CSIT on the $L_i \to D$ link. Fig. \ref{fig:MRS_comp} shows that power allocation due to available CSIT on the $L_i \to D$ channel at time $t+1$ improves the outage performance for the SRS-NCSI-best-energy and SRS-NCSI-best-decoding schemes as compared to their respective performance in Fig. \ref{fig:SRS_comp}, but MRS-ACSI outperforms both schemes comfortably due to inherent multiuser diversity exploitation.

To demonstrate the effect of relay selection metrics in (\ref{eqn:csi_gamma2}) and (\ref{eqn:MRS_Ph2}), we compare MRS-ACSI scheme with a similar 2-phase relay selection scheme proposed in \cite{majid:ietsp16}. Like MRS-ACSI, $M$ relays with the largest stored energies are selected in first phase. In the second phase, a relay $i$ out of $M$ relays is selected such that,
\begin{equation}
i^* = \arg\max_{i\in \Gamma} \Big(E_{\rm st}^i(t+1)-P_{id}(t+1)T\Big).
\end{equation}
This scheme is denoted by MRS-ACSI-best-energy. For the MRS-ACSI-best-energy scheme, $M^*=5$ for the parameters $\eta=0.1, N=10$ \cite{majid:ietsp16}. The results in Fig. \ref{fig:MRS_comp_group} reveal that our scheme performs better that the MRS-ACSI-best-energy scheme in spite of the fact that both schemes are exploiting multiuser diversity. From the numerical evaluation in Fig. \ref{fig:MRS_comp} and Fig. \ref{fig:MRS_comp_group}, we conclude that based on exploitation of multiuser diversity and careful design metrics for both phases as explained in Section \ref{sect:scheme_ACSI}, our proposed MRS-ACSI scheme performs better that the other schemes available in literature.

\section{Conclusions}
\label{sect:conclusion}
We propose novel relay selection schemes for the WPCNs and discuss the scenarios where the CSI is available at relay on the $S\to L_i$ link.
At the time of selection of relay at the $S\to L_i$ link, the unavailability of any information on CSI at the $L_i\to D$ link is a very practical scenario as compared to ideal scenarios where CSI availability for the fading channels is assumed throughout at the time of relay selection at the source.
Conditioned on the availability of the CSIT on the $L_i \to D$ link at time slot $t+1$, we propose two heuristic relay selection schemes to minimize system outage as the problem is not tractable due to unavailability relay power at the time of relay selection. To provide insight, we provide closed-form analytical expressions of the outage probability for both scenarios.
Next, we evaluate the performance of the proposed schemes numerically and compare it with the commonly used relay selection schemes. When the CSI is available at the relay on both $S\to L_i$ and $L_i \to D$ links, consideration of mutually independent i.i.d. channels on two hops makes the half duplex relay selection problem challenging. A two--phase relay selection scheme in conjunction with our proposed relay selection metrics is proposed to exploit the multiuser diversity effectively. The numerical evaluation shows that our proposed scheme outperforms the other schemes from the literature comfortably when power allocation is applied on the $L_i \to D$ link.

\section*{Acknowledgement}
The authors are thankful to Ioannis Krikidis for useful discussions that helped immensely to improve the quality of the paper.
\begin{appendices}

\section{Outage for the SRS scheme without CSIT}
\label{sect:Appen1}
Since the relay node selected for information transmission does not simultaneously harvest energy, the probability that a certain relay $i$ will be selected for information delivery at time moment $k$ can be expressed as
\begin{align}
\label{A_i}
A_i(k) = \text{Pr}\left(R_{si}(t+k) > R,~P_{r_{i}}(t+k-1)>P_r\right),
\end{align}
where $P_{r_{i}}(t+k-1)$ denotes the overall amount of stored harvested power at a relay $i$ after previous $k-1$ time slots and can be expressed using \eqref{4}, \eqref{energy_NCSI1} and \eqref{energy_NCSI2} as
\begin{align}
	P_{r_{i}}(t+k-1) &= \eta P_s S_{k-1} - n P_r,
\end{align}
where $S_{k-1} = \sum_{l=1}^{k-1}X_i$ and $X\sim Exp(\lambda)$, where $\lambda$ is the mean of the exponential RV $|h|^2$. $0\le n \le \left\lfloor \frac{k}{2}\right\rfloor$ denotes the number of time slots when this relay was chosen for the relay-to-destination transmission, and its maximum equals $ n = \left\lfloor \frac{k}{2}\right\rfloor$ when the relay is chosen for information transmission every two time slots\footnote{This scenario is applicable when a number of relays is low and will result in the high system outage since the energy harvested within one or two time slots, due to the channel randomness, is less likely to be sufficient to support information transmission with a fixed transmit power $P_r$, i.e., the higher is  the $P_r$ required the more severe outage occur.}.

Considering i.i.d. Rayleigh fading channels, the PDF of $S_l$ can be given for different $k$ as
\begin{align}
f_{S_{l}}(z) = \begin{cases}
0, \hfill \text{for}~k=1,\\
\lambda\exp\left(-\lambda z\right), z>0, \hfill \text{for}~k=2,\\
\lambda^2 z \exp\left(-\lambda z\right), z>0, \hfill \text{for}~k=3,\\
\frac{\lambda \left(\lambda z\right)^{l-1} \exp\left(-\lambda z\right)}{\Gamma(l)}, z>0, \hfill \text{for}~k>3.
\end{cases}
\end{align}
Therefore, due to the independence of the involved RVs, the relay selection probability can be further written as
\begin{align}
A_i(k) &= \text{Pr}\left(|h_{si}(t+k)|^2 > \frac{v\sigma^2}{P_s},
S_{k-1} > \frac{(n+1)P_r }{\eta P_s}
\right)\nonumber\\
&= \exp\left( - \frac{\lambda v\sigma^2}{P_s}\right) \left(1 - F_{S_{l}}\left(\underset{U}{\underbrace{\frac{(n+1)P_r }{\eta P_s}}}\right)\right),
\end{align}
where $v = 2^{2 R} - 1$ denotes the SNR value associated with the rate threshold $R$. $F_{S_{l}}\left(U\right)$ is the cumulative distribution function (CDF) of RV $S_{l}$ given by
\begin{align}
F_{S_{l}}(U) = \begin{cases}
1, \hfill \text{for}~k=1,\\
1 - \exp\left(-\lambda U\right), \hfill \text{for}~k=2,\\
\frac{1 - \left(\lambda U + 1 \right)\exp\left(-\lambda U\right)}{\lambda}, \hfill \text{for}~k=3,\\
\frac{\gamma_{inc}\left(l,\lambda U\right)}{\Gamma(l)}, \hfill \text{for}~k>3,
\end{cases}
\end{align}
where $\gamma_{inc}(s,x) = \int_{0}^{x} t^{s-1} \exp(-t){\rm d}t$ denotes the lower incomplete Gamma function \cite{Gradst}.

With respect to \eqref{eqn:SRS_selection}, the outage probability $P_{\rm out}$ given by \eqref{eqn:outage_EH} can be rewritten using \eqref{A_i} as
\begin{align}
P_{{\rm out},i}(k) &= \text{Pr}\left(\min\left(R_{si}(t+k),R_{id}(t+k)\right) < R\right) \nonumber\\
&= \text{Pr}\left(R_{id}(t+k) < R, R_{si}(t+k) > R, \right. \nonumber\\
& \hspace{0.9cm} \left. P_{r_i}(t+k - 1) > P_r\right) \nonumber\\
&= 1 - \text{Pr}\left(R_{id}(t+k) > R, R_{si}(t+k) > R, \right. \nonumber\\
& \hspace{1.5cm} \left. P_{r_i}(t+k - 1) > P_r\right) \nonumber\\
&= 1 - \exp\left( -\frac{\lambda v \sigma^2}{P_r} \right) A_i(k).
\end{align}

These derivations prove the results on outage performance presented in Proposition 1.

\section{Outage for the SRS scheme with CSIT}
\label{sect:Appen2}
In the case when CSIT on the $L_i \to D$ link is available at the relay nodes, the selected relay can efficiently allocate the power as $P_{id} = \frac{v\sigma^2}{|h_{id}|^2}$. Therefore, the outage probability of a relay $i$ can evaluated as
{\allowdisplaybreaks
\begin{align}
P_{\rm out} &= \text{Pr}\left(\min\left(\frac{|h_{si}|^2 P_s}{\sigma^2}, \frac{|h_{id}|^2 P_{id} }{\sigma^2} \right) < v\right) \nonumber\\
&= \text{Pr}\left(
\frac{|h_{id}|^2 P_{id} }{\sigma^2} < v,
|h_{si}|^2 > \frac{v \sigma^2}{P_s}, P_{r_i}> P_{id}\right) \nonumber\\
&= 1 - \text{Pr}\left(
\frac{|h_{id}|^2 P_{id} }{\sigma^2} \ge v,
|h_{si}|^2 > \frac{v \sigma^2}{P_s}, P_{r_i}> P_{id}\right) \nonumber\\
&= 1 - \text{Pr}\left(
v \ge v,
|h_{si}|^2 > \frac{v \sigma^2}{P_s}, |h_{id}|^2 > \frac{(n+1)v\sigma^2}{\eta P_{s} S_{l}}
\right) \nonumber\\
&= 1 - \text{Pr}\left(
|h_{si}|^2 > \frac{v \sigma^2}{P_s}\right) \text{Pr}\left( |h_{id}|^2 > \frac{(n+1)v\sigma^2}{\eta P_{s} S_{l}}
\right) \nonumber\\
&= 1 - \exp\left(-\frac{\lambda v \sigma^2}{P_s} \right) \underset{B(k)}{\underbrace{\text{Pr}\left( |h_{id}|^2 > \frac{Q}{S_{l}}\right)}},
\end{align}}where $Q = \frac{(n+1)v \sigma^2}{\eta P_s}$ and $B(k)$ is the complementary CDF defined as in \eqref{Bk}, shown at the top of the next page, where $K_1$, $G^{m,n}_{p,q}(\cdot)$ and $H^{m,n}_{p,q}(\cdot)$ denote the modified Bessel function of the second kind of order 1, the Meijer G-function \cite[(8.4.3.1)]{Prudnikov} and the Fox's H-function \cite[(1.2)]{mathai2009h}, \cite{8395375}, respectively.

This proves the result in Proposition 2.

\begin{figure*}[!t]
\begin{align}
\label{Bk}
B(k) &= \int_{0}^{\infty}f_{S_l}(z) \left[1 - F_{|h_{id}|^2}\left( \frac{Q}{z} \right)\right] \text{d}z = \int_{0}^{\infty}f_{S_l}(z) \exp\left(- \frac{\lambda Q }{z} \right) \text{d}z \nonumber\\
&= \begin{cases}
0, \hfill \text{for}~k=1,\\
2
\sqrt{\lambda Q}
K_1\left( 2\sqrt{\lambda Q} \right), \hfill \text{for}~k=2, \\
\left(\lambda Q\right)^2 G^{2,0}_{0,2}\left(\lambda Q \left|
\begin{array}{c}
-\\
0,~-2
\end{array}
\right.\right), \hfill \text{for}~k=3,\\
\frac{1}{\Gamma(k-1)}\left(\lambda Q\right)^{k-1} H^{2,0}_{0,2}\left(\lambda Q \left|
\begin{array}{c}
-\\
(0,1),~(-(k-1),1)
\end{array}
\right.\right),~\text{for}~k> 3.\\
\end{cases}
\end{align}
\hrulefill
\end{figure*}

\end{appendices}
\renewcommand{\baselinestretch}{1}
	
\renewcommand{\bibfont}{\scriptsize}
\bibliographystyle{IEEEtran}
\bibliography{bibliography}
\end{document}